\title{What Juris Hartmanis taught me about Reductions}
\author{Neil Immerman\\
  \url{https://people.cs.umass.edu/~immerman/}}
\begin{document}
\maketitle

\begin{abstract}
I was a student of Juris Hartmanis at Cornell in the late 1970's.  He believed that there was great
potential in studying restricted reductions.  I describe here some of his influences on me and, in
particular, how his insights concerning reductions helped me to prove that nondeterministic space is
closed under complementation.
\end{abstract}

\section{Introduction}

I arrived in Ithaca in August 1975, a twenty-one-year old PhD student in the
Cornell Math Department. My plan was to study recursive function theory, a branch of logic that
examines which mathematical problems are solvable, in principle, by  computer.
Yet, the most interesting -- and also the most challenging -- course I took that first fall was CS 681, the
graduate algorithms course. This was taught by John Hopcroft using the first modern algorithms text
\cite{AHU}.  I knew how to program, having worked the previous year
as a software engineer for GTE Sylvania,  but Hopcroft introduced me to a new world, and he taught
me a new way to think.

I promptly enrolled in the next semester's course, CS 682, graduate theory of computation, taught by
Juris Hartmanis. If Hopcroft challenged my thinking of computer science, Hartmanis completely
transformed my thinking about computation.  Had Hartmanis known how dramatically his course had
changed me that semester, he would have smiled and said, ``Wow!''
It is no exaggeration to say that in introducing me to the subject that I have studied ever since,
this course changed my life.

With his colleague Dick Stearns, Hartmanis had recently helped create the
field of computational complexity. The picture Hartmanis drew in the theory of computation course was of
a new field built on clear and beautiful mathematical principles. Hartmanis and Stearns had worked
out some of the relationships between the important new computational classes they defined. However,
unlike the more mature field of recursive function theory, in 1976, many of the basic and
fundamental questions in computational complexity were 
open. This was particularly exciting to me. No one realized at the time how hard these problems
actually were.

\section{Reductions}

The story I tell here focuses on reductions, a favorite tool of Hartmanis.  I will explain how the
understanding of reductions I gained from Hartmanis helped me give a simple solution to a
problem in computational complexity that had eluded reseachers for twenty-five years.

I assume that the reader is familiar with standard definitions in computational complexity, such as
for reductions and completeness \cite{Sipser}.  As I go further into definitions from
Descriptive Complexity, I hope the reader will enjoy just reading on and seeing the general ideas.
Additional definitions and background can be found in \cite{book}.

A reduction is the way we compare the computational complexity of two problems, $A$ and $B$.
We write $A \leq B$ to mean that there is a reduction from $A$ to $B$.
The problems $A$ and $B$ are decision problems, meaning on input $w$, we answer ``1'' if $w\in A$
and ``0'' otherwise.  A reduction from $A$ to $B$ is a transformation, $t$, taking any such input
$w$, to an input $t(w)$ to the decision problem $B$ in such a way that the answer to the question,
``Is $t(w) \in B$?'' is always exactly the same as the answer to the question, ``Is $w \in A$?''.
Furthermore, the mapping $t$ must have much lower computational complexity than the complexity of
the problems $A$ and $B$.

When $A \leq B$, it follows that
the computational complexity of $A$ is less than or equal to the computational complexity of $B$.
Richard Ladner proved that this less-than-or-equal-to relation on the computational complexity of problems is dense, i.e.,
if $A < B$, then there is an intermediate problem, $I$, and  $A < I < B$ \cite{Ladner}.
Thomas Schaefer pointed out that for many natural classes of problems, however, these intermediate problems
do not occur.  Instead, there is a dichotomy, that is, every problem in such a class is either NP
complete, i.e., a hardest problem in NP, or it is easy, i.e., in P \cite{Schaefer}.  Schaefer's
paper has lead to a great deal of important and fascinating explorations of such complexity
dichotomies, including the recent proof of the Feder-Vardi dichotomy conjecture for all constraint
satisfaction problems over finite domains \cite{FV99,Bulatov,Zhuk}.

Hartmanis was interested in the fact that complete problems tend to remain complete via
surprisingly weak reductions.  Steve Cook proved that $\sat$ is NP complete via ptime Turing
reductions \cite{Cook}.  When Dick Karp produced many other important NP-complete problems, he used
ptime, many-one reductions \cite{Karp}. Jones showed that these problems stay complete via logspace
reductions \cite{Jones}.  Hartmanis, Steve Mahaney and I showed that one-way logspace reductions
suffice \cite{HIM}; in this model the transducer reads its input once from left to right.

When I introduced Descriptive Complexity, I was pleased to see that first-order reductions -- which
are the natural way to translate one logical problem to another -- preserve completeness
properties for all natural complete problems and all complexity classes that I examined
\cite{book}.  Les Valiant had defined a reduction that is even weaker than first-order reductions,
however, and yet still preserves natural complete problems \cite{Valiant}.  Valiant called his reductions
projections: they are non-uniform and can perform no computations except negation (Def.\ \ref{projection de}).

I was pleased to observe that there is a natural restriction of first-order reductions making them
(first-order uniform) projections.  I call these first-order projections (fops) and observe that
natural problems stay complete via fops \cite{book}.  The weakness of fops is their strength as a
measuring tool; fops are the electron microscopes of computational complexity.

In the next section, I sketch relevant information from Descriptive Complexity.
In the following section I define uniform versions of Valiant's projections, namely first-order
projections (fop) and 
quantifier-free projections (qfp). The standard complexity problems for important complexity classes
remain complete for fops and sometimes even for qfps.  I will explain why this is remarkable and how
it fits in well with the research directions that Hartmanis pursued with reductions.

Finally, I
will show how the completness of graph reachability ($\reach$) for NL via qfps, led directly to my
proof that nondeterministic space is closed under complementation.  This was a surprising result and
a proof that Hartmanis was particulary pleased with.

\section{Descriptive Complexity}

Descriptive Complexity began with Fagin's Theorem, which gives a logical characterization
of $\np$.  We think of decision problems as sets of finite structures of a given finite
vocabulary.  A structure, $\cA=(A, R_1, \ldots, R_k)$ consists of a universe and a set of relations
over that universe, for example, a graph, $G=(V,E)$, has universe, $V$, the set of vertices, and the
binary edge relation, $E$.  Two examples of decision problems are  $\maj$, the set of binary strings, a majority of whose bits are ``1''
and $\Col$, the set of finite 3-colorable graphs.  Inputs to computers
are always binary strings, but we can think of them as encodings of other objects, e.g.,
structures encoding binary strings, graphs, etc.  In this framework, the input is a finite 
structure.  Thus, Descriptive Complexity is aligned with a part of
Mathematical Logic called Finite Model Theory in which only finite structures are considered.

\subsection{Ordered Structures and Numeric Formulas} 

When an input, $w$, is presented to a Turing Machine (TM) or other computer, it is in the form of a
binary string, i.e., $w \in \set{0,1}^n$ is a binary string of length $n$.  The number $n$ is always
used to denote the length of the input.  The input $w$ may be encoded or thought of as a 
structure, $\A_w$, which is also ordered, because a binary string is a finite ordered sequence of
bits.  Without the ordering, $\A_w$ would just be a bag of bits, losing almost all of its meaning.

However, suppose the input is a directed graph, $G= (V,E)$, i.e., a set of vertices, $V$, some
ordered pairs of which have an edge between them, $E \subseteq V \times V$.  It may seem easy to
understand the mathematical concept of an unordered graph.  However, no matter how you encode a
graph as a binary string, $w_G$, for example this could be a $\abs{V}^2$-bit adjacency matrix for
$G$, the string imposes an ordering on $V$:  there is an explicitly named first vertex, second
vertex, and so on.  This is unavoidable.  Thus, for all structures considered in descriptive complexity,
the logics considered have access to an explicit total ordering on the universe.  

Here we
assume throughout that the structures under consideration are ordered.  We have
constant symbols $\mn,\mx$ referring the the first and last elements in the ordering and 
relation symbol $\suc$ referring to the successor relation of the ordering.  The symbols $\mn, \mx,
\suc$ are called  {\em numeric} symbols and any formula using only logical or numeric symbols and no
input symbols, such as $E$, are called {\em numeric formulas} because their meaning only depends on
the size of the input structure, not which structure of that size it is.  It is crucial that the
languages under consideration have access to the total ordering of the universe, in this case via
the above numeric symbols.

\subsection{Second-Order Logic and Fagin's Theorem}

Second-order logic consists of first-order logic plus relation variables over
which we may quantify. Any second-order formula may be transformed into an equivalent formula with
all second-order quantifiers in front.  If all these second-order quantifiers are existential, then
we have a second-order existential formula.  Let $\soe$ be the set of decision problems expressible
by second-order existential sentences.

As an example, let $\Col$ be the NP-complete problem of deciding whether the vertices of an input
graph can be colored with three colors in a way that no two adjacent vertices have the same color.
How easy is it to express $\Col$ in $\soe$?  Very easy!  Let 
$R,Y,$ and $B$ be unary relation variables.  These represent three subsets of the universe, i.e.,
the red, yellow and blue vertices.  The  following $\soe$ formula expresses $\Col$.
\begin{eqnarray*} 
\Phi\sx{3-color} &\equiv& \exists R\, Y B\, \forall x y\, \Bigl( \bigl(R(x)\lor Y(x)\lor B(x)\bigr)
\; \land \; \bigl(E(x,y)\sra  \\ & & \qquad\lnot\bigl(R(x) \land R(y)\bigr)\, \land \,
\lnot\bigl(Y(x) \land Y(y)\bigr)\, \land \, \lnot\bigl(B(x) \land B(y)) \bigr) \Bigr)
\end{eqnarray*}
This is an example of Fagin's Theorem -- a completetly logical characterization of the 
complexity class NP.  A property is expressible in $\soe$ iff it is in NP:

\begin{theorem}{\bf (Fagin's Theorem)} {\rm \cite{Fagin}}\label{Fagin th}
$\np$ is equal to the set of decision problems expressible by existential, second-order
sentences, in symbols, $\np = \soe$.
\end{theorem}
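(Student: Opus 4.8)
The plan is to prove the two inclusions $\soe \subseteq \np$ and $\np \subseteq \soe$ separately. The first is routine: given a sentence $\exists R_1 \cdots \exists R_k\,\psi$ with $\psi$ first order, a nondeterministic machine on an input structure $\cA$ with universe of size $n$ first guesses the relations $R_i$ (the $i$-th has some fixed arity $a_i$, hence is specified by $n^{a_i}$ bits) and then checks deterministically that $(\cA,R_1,\ldots,R_k)\models\psi$. Evaluating a fixed first-order sentence on an explicitly presented finite structure takes polynomial time --- cycle through all tuples witnessing each quantifier --- so the whole procedure runs in nondeterministic polynomial time. Note that this direction does not even use the ordering.

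The content is in $\np \subseteq \soe$. Let $L\in\np$ be accepted by a nondeterministic Turing machine $M$ that halts within $n^c$ steps on inputs of length $n$. I want a sentence asserting ``there is an accepting computation of $M$ on the standard encoding $w_{\cA}$ of the input.'' Such a computation is an $n^c\times n^c$ array, rows indexed by time and columns by tape cell. Since a $c$-tuple of universe elements ranges over exactly $n^c$ values, and since the lexicographic order on $c$-tuples and its successor function are first-order definable from $\mn,\mx,\suc$, I can index both time steps and tape cells by $c$-tuples. The leading second-order existential quantifiers then guess relations encoding this array: for the fixed machine $M$, a relation $T_\sigma(\bar p,\bar t)$ for each tape symbol $\sigma$ (``cell $\bar p$ holds $\sigma$ at time $\bar t$''), a relation $H(\bar p,\bar t)$ (``the head is at $\bar p$ at time $\bar t$''), and a relation $S_q(\bar t)$ for each state $q$; one can also guess the sequence of nondeterministic choices.

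The first-order matrix $\psi$ then conjoins three assertions: (i) the row at time $(\mn,\ldots,\mn)$ is the correct initial configuration; (ii) each row is a legal $M$-successor of the preceding one; and (iii) some row is accepting. Assertion (iii) is immediate, and (ii) is a purely local condition --- the contents of cell $\bar p$ at time $\bar t+1$ depend only on the cells $\bar p-1,\bar p,\bar p+1$ and the head/state data at time $\bar t$ --- so, quantifying universally over $\bar p$ and $\bar t$, it is a finite case analysis over the transition table of $M$. Assertion (i) is the delicate one: it must say that the initial tape agrees bit-for-bit with $w_{\cA}$. This is exactly where a total order on the universe is essential --- from the order I can write a first-order formula that, given a cell position $\bar p$, decodes which entry of which input relation of $\cA$ that cell records, hence whether the bit there is $0$ or $1$; with no order on the universe there is no canonical $w_{\cA}$ to refer to.

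I expect assertion (i), together with the bookkeeping that makes (ii) genuinely first-order (defining tuple-successor and addition on $c$-tuples, head motion, boundary cells, and the nondeterministic transition relation), to be the main obstacle --- not because any one step is deep, but because the encoding must be pinned down carefully enough that the whole sentence stays first order, with no stray quantifier alternations beyond the leading existential second-order block. Once the encoding is fixed, correctness is a routine induction on time: the guessed relations describe an accepting run of $M$ on $w_{\cA}$ if and only if $\psi$ holds, i.e.\ if and only if $\cA\in L$.
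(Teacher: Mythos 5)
Your proposal is correct, and it is essentially the standard proof of Fagin's Theorem; the paper itself does not prove the theorem (it is stated with a citation to Fagin), so there is no in-paper argument to diverge from. Both inclusions are handled the right way: guess-and-verify for $\soe\subseteq\np$, and a Cook--Levin-style tableau, with time steps and tape cells indexed by $c$-tuples ordered lexicographically via $\mn,\mx,\suc$, for $\np\subseteq\soe$. One remark on your claim that the order is essential for pinning down $w_{\cA}$: within this paper's convention that all structures are ordered, your argument is exactly right, and the order is indeed what lets a first-order formula decode which input bit a given tape cell carries. But Fagin's original theorem holds for arbitrary (unordered) finite structures; there the fix is to add the linear order itself to the leading existential second-order block --- guess a binary relation, assert in first order that it is a total order, and then run your encoding relative to it --- together with the observation that membership in an isomorphism-closed NP property does not depend on which encoding of the structure the machine sees. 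So your reliance on the built-in order is a feature of this paper's framework, not of the theorem; otherwise the construction, including the careful handling of the initial-configuration clause and the purely local transition constraints, is the standard and complete argument.
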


 \section{First-Order Reductions}

 We now explain reductions appropriate for Descriptive Complexity.  
Here is a simplified definition.

\begin{definition}{({\bf First-Order Reduction})}\label{first-order query de}
 Let $\sigma$ and $\tau$ be vocabularies where $\tau = \angle{R_1^{a_1},\ldots,R_r^{a_r}}$, and let
$k$ be a fixed natural number.  A {\em $k$-ary first order mapping},
$I:\struc\sigma\rightarrow\struc\tau$ is given by $r$ first-order formulas of vocabulary
$\sigma$, $I = \angle{\phi_1, \ldots, \phi_r}$.  Given a structure $\A\in\struc\sigma$ with
universe $A$, $I$ maps $\A$ to
\[I(\A) \qeqdef (A^k,R_1^{I(\A)},\ldots,R_r^{I(\A)})\]
where the universe of $I(\A)$ is the set of $k$-tuples from the universe of $\A$, and the relation
$R_i^{I(\A)}$ is defined by the formula $\phi_i$ as follows.
\[ R_i^{I(\A)}\seqdef
\bigset{(\angle{b_1^1,\ldots,b_1^k},\ldots,\angle{b_{a_i}^1,\ldots,b_{a_i}^k})\in\abs{I(\A)}^{a_i}}{\A\models\phi_i(b_1^1,\ldots,b_{a_i}^k)}\]

We define a {\em first-order reduction} as a many-one reduction computed by a first-order mapping and
we write $A \foR B$ to mean that $A$ is reducible to $B$ via a first-order reduction.
\end{definition}

Let $\Sat$ be the set of satisfiable propositional formulas in conjunctive normal form, with at most
3 literals per clause.  (A literal is a variable or its negation.)  It follows from
Fagin's Theorem that $\Sat$ is NP complete viat first-order reductions \cite{book}.

To illustrate the concept of first-order reductions we present a reduction from $\Sat$ to $\Col$:

\begin{proposition}\label{Col for complete pr}
 $\Col$ is $\np$-complete via first-order reductions.
\end{proposition}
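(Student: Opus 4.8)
The plan is to check the two halves of the claim. Membership $\Col\in\np$ is already in hand: the existential second-order sentence displayed above for $\Col$, together with Fagin's Theorem (Theorem~\ref{Fagin th}), gives $\Col\in\soe=\np$. So the real content is hardness, and since $\Sat$ is $\np$-complete via first-order reductions and $\foR$ is transitive, it suffices to exhibit a single first-order reduction $\Sat\foR\Col$: a first-order mapping $I$ in the sense of Definition~\ref{first-order query de} such that for every structure $\A$ coding a $3$-CNF formula $\varphi$, the graph $I(\A)$ is $3$-colorable iff $\varphi$ is satisfiable.

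I would build $I$ as the classical gadget reduction from $\Sat$ to $\Col$, recast so that vertices are tuples over the universe $A$ of $\A$ and the (symmetric) edge relation is first-order definable. Fix a constant arity $k$ (say $k=3$) and use numeric formulas in $\mn,\mx,\suc$ to carve $A^k$ into pairwise disjoint regions: (i) three distinguished tuples forming a palette triangle, whose colors we call $T$, $F$, $N$; (ii) for each variable $v$, two tuples $x_v,\bar x_v$ joined to each other and to the $N$-tuple, so that in any proper $3$-coloring exactly one of $x_v,\bar x_v$ gets color $T$, which reads off a truth assignment; (iii) for each clause $c$, a fixed finite OR-gadget (the standard cascade of two binary-OR subgadgets) built from boundedly many fresh tuples indexed by $c$, whose three \emph{input} contacts are wired to the literal tuples occurring in $c$ and whose \emph{output} contact is joined to the $F$- and $N$-tuples so as to force the output to be colored $T$. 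The internal edges of each gadget form a fixed finite pattern and hence are trivially first-order; the only data-dependent edges are the wires running from clause gadgets to literal tuples, and these are expressible precisely because clause--literal incidence is exactly what the input relations of $\A$ record. Taking $E^{I(\A)}$ to be the symmetric disjunction of these finitely many cases yields the required first-order formula; the tuples of $A^k$ used by no gadget are simply isolated vertices, which does no harm to $3$-colorability.

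Correctness is the usual argument. From a proper $3$-coloring of $I(\A)$, restricting to the palette names the three color classes $T,F,N$; the literal gadgets then force, for each $v$, exactly one of $x_v,\bar x_v$ to be colored $T$, defining an assignment $\alpha$; the OR-gadget of a clause $c$ can be properly $3$-colored exactly when at least one of its inputs has color $T$, i.e.\ when $\alpha$ satisfies $c$; hence $\alpha\models\varphi$. Conversely, a satisfying assignment extends to a proper $3$-coloring by coloring each literal tuple according to its truth value and then filling in every OR-gadget consistently.

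I expect the main obstacle to be bookkeeping rather than mathematics: one must choose the arity $k$ and the region-defining numeric formulas so that the regions are genuinely disjoint and large enough to hold every gadget. Since there are only linearly many variables and clauses but a bounded number of new tuples per gadget, any fixed $k\ge 2$ leaves ample room for all but finitely many inputs, provided one reserves the boundary elements $\mn,\mx$ for the fixed palette and keeps the index sets of variables and clauses in the interior of the order. The remaining nuisances are the degenerate inputs --- structures whose universe is too small to embed the gadgets, and clauses with repeated or missing literals --- and these are handled in the standard way: the too-small universes form a bounded family on which $\Col$-membership is a constant that a first-order sentence can detect (reading off the size via $\mn,\mx,\suc$) and hard-wire, while degenerate clauses are first-order normalized before being wired. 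Once these choices are pinned down, $I$ is a genuine first-order mapping and $\Sat\foR\Col$, which completes the proof.
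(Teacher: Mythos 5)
Your proposal is correct and follows essentially the same route as the paper: a reduction $\Sat\foR\Col$ via the classical palette-triangle, literal-ladder, and clause OR-gadget construction, with the observation that only the clause--literal incidence edges depend on the input and are therefore first-order definable from $\A$. The extra bookkeeping you flag (choice of arity, disjoint numeric regions, degenerate inputs) is exactly what the paper leaves to the reader with ``the details are easy to fill in,'' so there is no substantive difference.
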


\begin{proof}
 We show that $\Sat\foR\Col$.  We are given a structure $\A$ encoding a 3-CNF formula and we must
 produce a 
 graph $f(\A)$ that is three colorable iff $\A\in\Sat$.  Let $n=\card{\A}$, so $\A$ is a boolean
 formula with at most $n$ variables and $n$ clauses.

 The construction of $f(\A)$ is shown in Figure \ref{color fi}.  Notice the triangle, with vertices
 labeled $T,F,R$.  Any three-coloring of the graph must color these three vertices distinct colors.
 We may assume without loss of generality that the colors used to color $T,F,R$ are true, false, and
 red, respectively.

 Graph $f(\A)$ also contains a ladder each rung of which is a variable $x_i$ and its negation
 $\ov{x_i}$.  Each of these is connected to $R$, meaning that  one of
 $x_i,\ov{x_i}$ must be colored true and the other false.

\begin{myfigure}\label{color fi}
\centerline{\includegraphics[width=\textwidth]{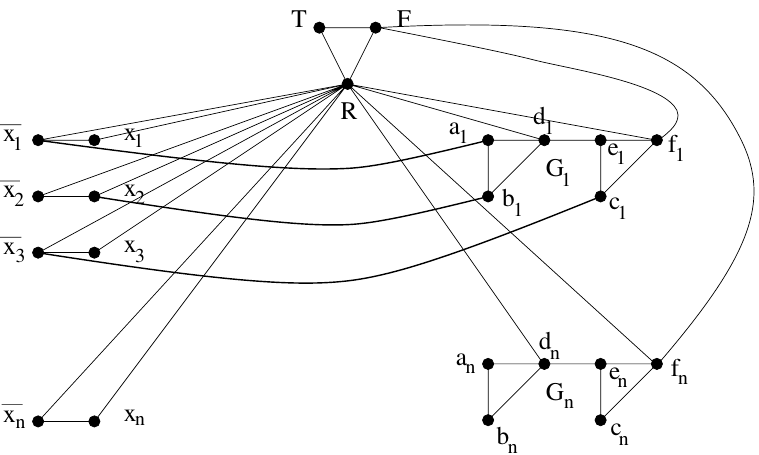}}
\ctrline{{\bf Figure \ref{color fi}: } $\Sat\foR\Col$; $G_1$ encodes clause $C_1 = (\ov{x_1}\lor
{x_2}\lor \ov{x_3})$}
\end{myfigure}

 Finally, for each clause $C_i =\ell_1\lor\ell_2\lor\ell_3$, $f(\A)$ contains the gadget $G_i$
 consisting of six vertices.  $G_i$ has three inputs $a_i,b_i,c_i$, connected to literals
 $\ell_1,\ell_2,\ell_3$, respectively, and it has one output, $f_i$. See Figure \ref{color fi} where
 gadget $G_1$ encodes clause $C_1 = \ov{x_1}\lor {x_2}\lor \ov{x_3}$.

 Observe that the triangle $a_1,b_1,d_1$ serves as an ``or''-gate in that $d_1$ may be colored true
 iff at least one of its inputs $\ov{x_1},x_2$ is colored true.  Similarly, output $f_1$ may be
 colored true iff at least one of $d_1$ and the third input, $\ov{x_3}$, is colored true.  Since
 $f_i$ is connected to both $F$ and $R$, $f_i$ can only be colored true.  It follows that a three
 coloring of the literals can be extended to color $G_i$ iff the corresponding truth assignment
 makes $C_i$ true.  Thus, $f(\A)\in\Col$ iff $\A\in\Sat$.

 The details of first-order reduction $f$ are easy to fill in.  $f(\A)$ consists of one triangle, a
ladder with $n$ rungs, and $n$ copies of the gadget.  The only dependency on the input $\A$ --- as
opposed to its size --- is that there is an edge from literal $\ell$ to input $j$ of gadget $G_i$
iff $\ell$ is the $j\th$ literal occurring in $C_i$.
\end{proof}

\subsection{Projections}

Projections are not-necessarily uniform reductions such that 
each bit of the output depends on at most one bit of the input. 

\begin{definition}\label{projection de} (Projection)  \cite{Valiant}\quad  A $k$-ary {\em
    projection}, $f$, is a 
  function from binary strings to binary strings, $f(w) = f_{\abs{w}}(w)$, determined by a family
 of functions, $f_n: \set{0,1}^n \rightarrow \set{0,1}^{n^k}$, having the following property.
 There is a function $t$ such 
that for each $n$, each $w \in \set{0,1}^n$ and each $j\leq
 n^k$, bit $j$ of $f_n(w)$ is one of the following, and the same choice is made for all $w\in
 \set{0,1}^n$:
  \begin{itemize}
    \item bit $j$ of $f_n(w)$ is 0 for all $w\in \set{0,1}^n$, or, \item bit $j$ of $f_n(w)$ is 1
    for all $w\in \set{0,1}^n$, or, \item bit $j$ of $f_n(w)$ is $w_{t(n,j)}$ for all $w\in
    \set{0,1}^n$, i.e. bit $t(n,j)$ of the input, $w$, or
  \item bit $j$ of $f_n(w)$ is $\lnot
    w_{t(n,j)}$ for all $w\in \set{0,1}^n$, i.e. the negation of bit $t(n,j)$ of the input, $w$.
  \end{itemize}

Equivalently, $k$-ary projections are the set of functions computed by a 
family of circuits where the $n$th circuit computes $f_n$, and consists of $n^k$ wires -- one for
each output bit -- each coming from a constant, 0 gate or a single input gate; with no computatiion
gates except at most one ``not'' gate per wire.
\end{definition}

As an example, the reduction $\Sat \foR \Col$ we gave in Proposition \ref{Col for complete pr} is a
projection.  On input $\A$, the first-order reduction produces a binary string encoding the
adjacency matrix of the graph $f(\A)$.  Each particular bit of this adjacency matrix is 1 if the
corresponding edge is present in $f(\A)$.  Look at Figure \ref{color fi} and note that all of the
drawn edges are the constant ``1'', all the non-drawn edges are the constant ``0'' with the
exception of the edges $(\ell,a_i), (\ell,b_i), (\ell,c_i)$ for literal $\ell$.  Each of these comes
from a single bit of the input, $\A$.  For example, the output bit $E(x_7,a_1)$, saying that
there is an edge from literal $x_7$ to vertex $a_1$ in $f(\A)$ is equal to the input bit of $\A$
which says whether $x_7$ is the first literal of $C_1$ in $\A$,  
Similarly, output bit $E(\ov{x_7},b_1)$ is equal to the input bit determining whether $\overline{x_7}$
is the second literal of $C_1$ in $\A$.

We now define first-order projections. These are simply projections that are also first-order
reductions.  We syntacticly restrict the formulas defining the reduction to force them to be
projections.

\begin{definition}\label{fop de}({\bf First-Order Projection} and {\bf Quantifier-Free Projection}): 
 Let $I =\angle{\phi_1,\ldots,\phi_r}$ be a $k$-ary first-order reduction from $S$ to $T$
 (Definition \ref{first-order query de}).  Suppose that for $i\geq 1$, $\phi_i$ satisfies the
 following {\em projection condition}.

 \vspace*{.1in}

 \myequation{projection eq}{\phi_i \equiv \alpha_1 \lor
   (\alpha_2\land \lambda_2)\lor\cdots\lor (\alpha_e\land \lambda_e)}

 \vspace*{.1in}

 \noindent where the $\alpha_j$'s are
 mutually exclusive numeric formulas, i.e., no input relations occur, and each $\lambda_j$ is a literal,
 i.e. an atomic formula $P(x_{j_1},\ldots x_{j_a})$ or its negation.

In this case, predicate $R_i(\angle{u_1,\ldots,u_k},\ldots,\angle{\ldots,u_{ka_i}})$ holds in
$I(\A)$ if $\alpha_1(\bar u)$ is true or if $\alpha_j(\bar u)$ is true for some $1<j\leq e$ and if
the corresponding literal $\lambda_j(\bar u)$ holds in $\A$.  Thus, each bit in the binary
representation of $I(\A)$ is determined by at most one bit in the binary representation of $\A$.  We
say that $ I$ is a {\em first-order projection} (fop).  Write $S\fop T$ to mean that $S$ is
reducible to $T$ via a first-order projection.

If $I$ is a fop whose defining formulas are quantifier-free, then $I$ is a {\em
quantifier-free projection} (qfp).  Write $S\qfp T$ to mean that $S$ is reducible to $T$ via a
qfp.\end{definition}

The key features of a fop $I$ are (1) $I$ is a projection (each bit of $I(\A)$ is determined
by at most one bit of $\A$) and (2) the first-order numeric formulas $\alpha_i$ in Equation
(\ref{projection eq}) tell us which bit of the input to copy and whether or not to negate it.

The difference between a qfp and a fop is that in a qfp, the numeric formulas $\alpha_i$ are
actually quantifier-free.  From Fagin's Theorem, we have that $\sat$ and $\Sat$ are complete via
qfps.
Furthermore, the reader should check\footnote{The best way to understand this is to write it out for
yourself.  For other examples of fops and qfps, see \cite{book}.} that the reduction we gave in
Proposition \ref{Col for complete pr} can be written as a qfp\footnote{In a paper exploring
  first-order and quantifier-free reductions, Elias Dahlhaus
  correctly proves using his definition that $\Col$ is NP complete via first-order reductions but
  {\bf not} via quantifier-free reductions  \cite{Dahlhaus}.  The crucial difference is that
  in his definitions, Dahlhaus did not allow numeric relations such as $\suc$.}.

\begin{proposition}\label{Col qfp complete pr}
 $\Sat\qfp\Col$ and thus $\Col$ is $\np$-complete via qfp's.
\end{proposition}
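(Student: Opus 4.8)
The plan is to take the explicit reduction $f:\Sat\to\Col$ constructed in the proof of Proposition~\ref{Col for complete pr} and verify that it can be written in the syntactic form of Equation~(\ref{projection eq}), with \emph{quantifier-free} numeric formulas $\alpha_j$. Recall that $f(\A)$ consists of a fixed triangle on vertices $T,F,R$, a ladder with $n$ rungs (one rung per variable $x_i$, consisting of the vertex pair $x_i,\ov{x_i}$), and $n$ copies $G_1,\dots,G_n$ of the six-vertex gadget. So the first step is bookkeeping: choose the arity $k$ and a quantifier-free way to address, within $k$-tuples of the universe $A$ of $\A$, each of the $O(n)$ vertices of $f(\A)$ --- the triangle vertices, the $2n$ literal vertices, and the $6n$ gadget vertices. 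Since there are $\Theta(n)$ vertices, $k=2$ suffices (tuples $\angle{u_1,u_2}$), using $\mn,\mx,\suc$ and equality to carve the universe $A^2$ into the needed named regions and to pick out, e.g., ``the $a$-input of gadget $G_i$'' as a specific tuple determined quantifier-free by $i$.

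Next I would write down the single output relation of $\Col$, the edge relation $E^{f(\A)}$, as a formula $\phi_E(\bar u,\bar v)$ of the shape in~(\ref{projection eq}). The disjuncts split into three kinds. (i) Edges that are present for \emph{all} inputs of size $n$ --- the triangle edges, the rung edges $x_i$--$\ov{x_i}$, the edges joining $R$ to each literal vertex, and all the internal edges of every gadget $G_i$ (the two ``or''-gate triangles and the edges from $f_i$ to $F$ and to $R$): each of these is captured by a quantifier-free numeric formula $\alpha_j$ with \emph{no} accompanying literal (equivalently paired with the literal $\mathbf{true}$), because whether $\bar u,\bar v$ name such a pair depends only on $n$, not on $\A$. (ii) Edges absent for all inputs: these contribute nothing (they fall under ``bit $j$ is $0$ for all $w$''). (iii) The only genuinely input-dependent edges: for each clause $C_i$ and each $j\in\{1,2,3\}$, an edge from literal $\ell$ to the $j$-th input vertex of $G_i$, present exactly when $\ell$ is the $j$-th literal of $C_i$ in $\A$. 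Here the disjunct is $\alpha_j(\bar u,\bar v)\land\lambda_j$, where $\alpha_j$ is the quantifier-free numeric formula asserting ``$\bar u$ names literal $\ell$ and $\bar v$ names input $j$ of gadget $G_i$'' and $\lambda_j$ is the atom of $\A$'s input vocabulary saying ``$\ell$ is the $j$-th literal of clause $C_i$.'' One checks the $\alpha_j$ are pairwise mutually exclusive, as required.

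The main thing to be careful about --- and the only place the statement has any content beyond Proposition~\ref{Col for complete pr} --- is the \textbf{quantifier-free} requirement: every numeric test used to locate the triangle, the ladder rungs, the gadget vertices, and the correspondence between a variable index $i$ and ``its'' rung and gadget must be expressible using only $\mn,\mx,\suc$, equality, and Boolean connectives, \emph{without} quantifiers. This is where Dahlhaus's framework (as noted in the footnote) gets a different answer: he disallows numeric relations like $\suc$, and without them one genuinely needs quantifiers (or cannot do it at all). With $\suc$ available, each such test is a finite Boolean combination of successor/equality atoms on the $2k$ free variables, so it is quantifier-free. Finally I would invoke the universe-size convention for first-order reductions (so that $|f(\A)|=|A|^k$ matches the intended $\Theta(n)$ vertices up to padding with isolated, hence harmless, vertices), conclude that $f$ is a qfp, hence $\Sat\qfp\Col$, and then combine this with the fact recorded just before Proposition~\ref{Col qfp complete pr} that $\Sat$ is $\np$-complete via qfps (from Fagin's Theorem), together with transitivity of $\qfp$, to get that $\Col$ is $\np$-complete via qfps.
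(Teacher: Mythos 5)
Your proposal is correct and is essentially the argument the paper intends: the paper itself only asks the reader to check that the reduction of Proposition~\ref{Col for complete pr} can be written as a qfp (footnote included), and you carry out exactly that verification --- always-present edges as purely numeric quantifier-free disjuncts, the clause-to-gadget edges as $\alpha_j\land\lambda_j$ with a single input literal, quantifier-freeness resting on the availability of $\mn,\mx,\suc$ --- and then combine it with the qfp-completeness of $\Sat$ and transitivity. The only quibble is the bookkeeping claim that arity $k=2$ suffices for quantifier-free addressing of all eleven vertex families; this may require some corner-case fiddling, but taking $k=3$ removes the issue without changing anything else.
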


\section{Transitive Closure}

Let $\reach$ be the set of directed graphs having a path from $\mn$ to $\mx$.  Let $\reachd$ be the
subset of $\reach$ whose graphs have outdegree at most 1.  In \cite{capture}, I introduced a
transitive closure operator.  Let $\phi(x_1,\ldots ,x_k,x_1^\prime \ldots x_k^\prime )$ be a formula
of vocabulary $\tau$ with $2k$ free variables.  For any $\A \in \struc\tau$, $\phi$ expresses a
binary relation on $k$-tuples from $\A$,
\[\phi^\A \qe \bigset{(\ov{a},\ov{a'})}{(\A,\ov{a},\ov{a'}) \models \phi}\; .\]

The {\em transitive closure} operator, $\tc$, denotes the reflexive, transitive closure of such
relations.  Thus, for example, $\reach = \bigset{G}{G \models \tc(E)(\mn,\mx)}$.

I also introduced a deterministic version of transitive closure, $\dtc$.  Given a first order
relation, $\phi(\ov{x},\ov{y})$, define its deterministic reduct,
\[\phi_d(\ov{x},\ov{y})\quad \equiv\quad \phi(\ov{x},\ov{y})\,
\land\, (\forall \ov{z})( \lnot \phi(\ov{x},\ov{z}) \lor (\ov{y} = \ov{z}))
\]
That is, $\phi_d (\ov{x},\ov{y})$ is true just if the unique $\phi_d$ edge from $\ov{x}$ is to $\ov{y}$.
Now define:
\[(\dtc\, \phi) \quad \equiv \quad  (\tc\, \phi_d)\; .\]

Thus, for example,
\[\reachd = \bigset{G}{G \models \left(\dtc(E)(\mn,\mx)\land \mbox{outdegree}(G)\leq 1\right)}\; . \]

The operators $\tc$ and $\dtc$ capture the power of nondeterministic and deterministic logspace. (I
also introduced an alternating transitive closure operator, $\atc$, capturing the power of
$\aspace[\log n] = \ptime$ \cite{capture}.)

\begin{theorem}\label{dtc is l th}  {\rm \cite{capture}}  
  \[ \fo(\dtc)\se\logspace\;  \qquad \mbox{{\rm and }} \qquad \fo(\ptc) \se \nl
  \qquad \mbox{{\rm and }}  \qquad \fo(\atc) \se \ptime
  \; .\]
\end{theorem}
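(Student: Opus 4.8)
The plan is to prove all three inclusions by the same method: given a fixed sentence $\Phi$ of the logic in question and an ordered input structure $\A$ with $\abs{\A}=n$, I describe a machine of the appropriate type --- deterministic logspace for $\fo(\dtc)$, nondeterministic logspace for $\fo(\ptc)$, alternating logspace for $\fo(\atc)$ --- that decides whether $\A\models\Phi$. Since $\Phi$ is fixed, its length, and in particular its operator-nesting depth $d$, are constants, and all resource bounds are measured in $n$; for the $\fo(\atc)$ case the conclusion $\fo(\atc)\se\ptime$ then follows from $\aspace[\log n]=\ptime$.

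First I would recall the classical evaluation of the purely first-order part, by structural induction on subformulas, which is exactly the standard argument that $\fo\se\logspace$. An atomic formula built from the numeric symbols $\mn,\mx,\suc,=$ or from an input relation of $\A$ is decided by a direct bit-address computation or table lookup, holding the element-names of its arguments in $O(\log n)$-bit registers. Boolean connectives compose sub-evaluations with no extra space. A quantifier $\exists x\,\psi$ is evaluated by cycling $x$ through the $n$ universe elements --- starting at $\mn$ and repeatedly applying $\suc$, which is exactly where the ordering assumption is used --- keeping the current value of $x$ in an $O(\log n)$-bit register and recursively evaluating $\psi$; the universal quantifier is dual, and for the alternating machine these become existential and universal branching. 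The values of all free variables currently in scope live in a fixed number of $O(\log n)$-bit registers.

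The new case is a subformula $\tc[\phi(\bar x,\bar y)](\bar s,\bar t)$, where $\bar s,\bar t$ are $k$-tuples of terms over the variables in scope and $\phi$ may itself contain transitive-closure operators and further parameter variables bound further out. Regard $\phi$ as defining a directed graph on the $n^{k}$-element set of $k$-tuples. For $\fo(\ptc)\se\nl$ the machine runs a reachability search: it holds a current tuple $\bar u$ (initially $\bar s$) and a step counter bounded by $n^{k}$, both $O(k\log n)$ bits, and at each step nondeterministically guesses the next tuple $\bar v$, verifies the edge by recursively evaluating $\phi(\bar u,\bar v)$, and continues, accepting if it reaches $\bar t$ --- the reflexive case $\bar s=\bar t$ being checked outright. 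For $\fo(\dtc)\se\logspace$ the same walk is deterministic: from $\bar u$ one scans the tuples in lexicographic order for the unique successor selected by the deterministic reduct $\phi_d$, follows it, and halts after at most $n^{k}$ steps with the answer. For $\fo(\atc)\se\ptime$ one evaluates the alternating-reachability predicate of $\atc$ by the analogous walk, with the alternation of $\phi$'s variable blocks driving the machine's existential and universal choices, and then appeals to $\aspace[\log n]=\ptime$.

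I expect the main obstacle to be the space bookkeeping for \emph{nested} transitive-closure operators: evaluating the edge formula $\phi$ of an outer $\tc$ may invoke an inner $\tc$, which may invoke yet another, up to depth $d$, and one must check this does not blow the logspace budget. The clean way to handle it is to reserve, for each of the $d$ nesting levels, its own constant-size bank of $O(\log n)$-bit registers --- current tuple, step counter, in-scope first-order variables --- each bank being reused whenever control returns to that level; since $d$ is a constant depending on $\Phi$ alone, the total is $O(\log n)$. (Equivalently, one invokes the composition lemma that a constant-depth nesting of logspace procedures, which recompute bits of their "inputs" on demand rather than materializing the intermediate $k$-tuple structures, is again logspace.) The remaining details --- encoding tuples and counters as $O(\log n)$-bit numbers, computing "next tuple in lexicographic order" from $\suc$, and handling side-condition guards such as the outdegree-$\le 1$ test that appears in $\reachd$ --- are routine.
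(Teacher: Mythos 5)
There is a genuine gap: you have proved only half of each claim. The theorem (quoted from \cite{capture}) asserts \emph{equalities} --- the operators $\dtc$, $\tc$, $\atc$ \emph{capture} $\logspace$, $\nl$, and $\ptime$ on ordered structures --- whereas your machine simulations establish only the containments $\fo(\dtc)\subseteq\logspace$, $\fo(\ptc)\subseteq\nl$, and $\fo(\atc)\subseteq\ptime$. The missing, and in fact deeper, direction is that every property of ordered structures decidable in $\logspace$ (resp.\ $\nl$, $\ptime$) is \emph{expressible} in the corresponding logic. That direction is proved by simulating the machine inside the logic: using the ordering ($\mn$, $\mx$, $\suc$), a configuration of an $O(\log n)$-space machine (state, head positions, worktape contents) is coded as a fixed-length tuple of universe elements, the ``next configuration'' relation is shown to be first-order --- indeed essentially quantifier-free --- definable from the input relations, and acceptance is then exactly a $\dtc$ (deterministic machine), $\tc$ (nondeterministic machine), or $\atc$ (alternating machine, with $\aspace[\log n]=\ptime$) assertion from the initial to the accepting configuration. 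Without this converse the results you need downstream do not follow: the qfp-completeness of $\reach$ for $\nl$ (Theorem \ref{qfp complete th}) and the derivation of $\nl=\co\nl$ from $\fo(\ptc)=\fo(\tc)$ both use $\nl\subseteq\fo(\ptc)$, not just the containment you proved.

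Two smaller remarks on the direction you did prove. First, in the $\nl$ case your blanket claim that ``Boolean connectives compose sub-evaluations'' needs the positivity restriction built into $\fo(\ptc)$: a nondeterministic logspace machine can guess a path to witness a positive $\tc$ atom, and can handle conjunctions, disjunctions and both quantifiers by sequential re-evaluation, but it cannot directly evaluate a negated $\tc$ subformula --- that is precisely the issue resolved only later by Theorem \ref{complement th}. So you should say that negations occur only on $\tc$-free subformulas. Second, your handling of nested operators by a constant number of register banks, one per nesting level, is the standard and correct bookkeeping; that part of the argument is fine.
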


In the following, $\reacha$ is an alternating version of $\reach$.

\begin{theorem}\label{qfp complete th} {\rm \cite{capture}}
  \[
   \begin{array}{cccc}
    \reachd &\mbox{{\rm is qfp complete for}}& \logspace,& \quad \mbox{{\rm and}}\\
    \reach &\mbox{{\rm is qfp complete for}}& \nl,& \quad \mbox{{\rm and}}\\
   \reacha  &\mbox{{\rm is qfp complete for}}& \ptime.&\\
   \end{array}
   \]
\end{theorem}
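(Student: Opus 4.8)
The plan is to prove each completeness assertion in two parts, membership and hardness, with the hardness direction handled uniformly by the classical configuration-graph construction, checked to be a quantifier-free projection. Membership is immediate from Theorem~\ref{dtc is l th} together with the definitions already given: $\reachd$ is cut out by a sentence of $\fo(\dtc)$, so $\reachd\se\logspace$; $\reach=\{G:G\models\tc(E)(\mn,\mx)\}$ lies in $\fo(\tc)\se\nl$; and the alternating reachability problem $\reacha$ is decidable in polynomial time (equivalently, definable in $\fo(\atc)\se\ptime$), so $\reacha\se\ptime$. So all the work is in the hardness direction.

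For NL-hardness of $\reach$, take a problem $A\in\nl$ and a nondeterministic machine $M$ deciding $A$ in space $O(\log n)$; use the standard model in which $M$ has random access to its input structure $\A$ via an index tape on which it writes a tuple $\bar a$ of universe elements and in one step learns whether $R(\bar a)$ holds, for an input relation $R$. First put $M$ in a normal form with a unique initial configuration and a unique accepting configuration (on accepting, blank the worktape and reset all heads, an $O(\log n)$ overhead). On a structure $\A$ of size $n$, a configuration of $M$ records the state, the worktape head position and worktape contents, and the current index-tape contents; this is $O(\log n)$ bits, so a configuration is a $k$-tuple over the universe for a constant $k=k(M)$, and we lay out the encoding so that the $O(1)$ universe elements forming $M$'s current oracle query occupy their own coordinates of the tuple. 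The reduction sends $\A$ to the digraph $f(\A)$ on $k$-tuples whose edge formula $\phi_E(\bar u,\bar v)$ asserts ``$\bar v$ is a one-step successor of $\bar u$,'' together with one edge from $\langle\mn,\dots,\mn\rangle$ to the initial configuration and one from the accepting configuration to $\langle\mx,\dots,\mx\rangle$. Then $\A\in A$ iff $M$ accepts $\A$ iff $f(\A)$ has a path from $\mn$ to $\mx$, i.e.\ iff $f(\A)\in\reach$.

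It remains to see that $f$ is a quantifier-free projection. For a fixed pair $(\bar u,\bar v)$, whether there is an edge depends on at most one bit of $\A$: if the transition leaving $\bar u$ does not consult the input, the presence of the edge is decided by the fixed finite transition table of $M$ and numeric comparisons on $\bar u,\bar v$ alone; if it does, it consults exactly the bit $R(\bar z)$, where $\bar z$ is the subtuple of $\bar u$ holding the current query. Hence $\phi_E$ has the projection shape of Equation~(\ref{projection eq}), namely $\phi_E\equiv\alpha_1\lor\bigvee_j(\alpha_j\land\lambda_j)$, where $\alpha_1$ collects all input-independent edges (including the two edges through $\mn$ and $\mx$), each $\lambda_j$ is a literal $R(\bar z)$ or $\lnot R(\bar z)$ with $\bar z$ among the coordinates of $\bar u$, the numeric $\alpha_j$ encode when such a transition fires under a yes- or no-answer, and a little bookkeeping makes the $\alpha$'s mutually exclusive. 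The specializations are then immediate: for $A\in\logspace$ take $M$ deterministic, so every configuration has at most one successor and $f(\A)$ has outdegree $\le 1$ by construction, whence $\A\in A$ iff $f(\A)\in\reachd$; for $A\in\ptime$ use $\ptime=\aspace[\log n]$, let $M$ be an alternating logspace machine, and read its existential configurations as OR-vertices and its universal configurations as AND-vertices, so that $M$ accepts $\A$ iff $\mn$ alternately reaches $\mx$, i.e.\ $f(\A)\in\reacha$.

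The crux, and the step I expect to be the real work, is verifying that the $\alpha_j$ can be taken \emph{quantifier-free}: that ``$\bar v$ is the one-step successor of $\bar u$'' is a bounded Boolean combination of built-in numeric relations on the coordinates of $\bar u$ and $\bar v$, with no quantifier. This is where the low-level choices matter, since the worktape update ``the new contents agree with the old everywhere except at the head cell'' must be expressed without quantifying over the $\Theta(\log n)$ worktape cells; this forces a careful encoding in which reading and writing one worktape cell become BIT-style operations available quantifier-free in the numeric vocabulary, or, equivalently, an appeal to the normal-form fact that every $\fo(\tc)$ (resp.\ $\fo(\dtc)$, $\fo(\atc)$) query is the transitive closure of a single quantifier-free projection. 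Everything else, the normal form for $M$, the two edges through $\mn$ and $\mx$, and the passage to the deterministic and alternating variants, is routine once that point is settled.
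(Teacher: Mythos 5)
There is an important point of reference to get straight first: the paper does not prove Theorem~\ref{qfp complete th} at all; it quotes it from \cite{capture} (see also \cite{book}), so your proposal has to be measured against the proof given there. Your skeleton --- membership from Theorem~\ref{dtc is l th}, hardness by sending a structure to the configuration graph of a logspace-bounded machine, plus the observation that such a machine consults at most one input bit per step, so each output edge depends on at most one bit of $\A$ --- is the right shape, and you correctly locate the crux: making the numeric formulas $\alpha_j$ of Equation~(\ref{projection eq}) quantifier-free. But that crux is exactly where the proposal stops being a proof. With a constant-width tuple encoding, each coordinate must pack $\Theta(\log n)$ worktape bits, and a condition such as ``the new worktape word agrees with the old one except at the head cell'' is not quantifier-free in the numeric vocabulary the paper makes available ($\mn$, $\mx$, $\suc$, the order): a quantifier-free formula has no auxiliary variables, so even granting BIT-style atomic predicates you cannot range over bit positions, nor name $2^p$ for the head position $p$, without either quantifiers or extra ``scratch'' coordinates whose correctness must itself be certified --- which is the same problem again. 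Your other escape route, invoking the fact that every $\fo(\tc)$ (resp.\ $\fo(\dtc)$, $\fo(\atc)$) query is the transitive closure of a single quantifier-free projection, is not an escape: that normal form immediately yields $S \qfp \reach$ for every $S \in \nl$, i.e.\ it essentially \emph{is} the theorem, so appealing to it is circular unless it is proved independently.

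That independent proof is how the cited argument actually runs, and it is genuinely different from your direct simulation. One first proves $\nl = \fo(\ptc)$ (Theorem~\ref{dtc is l th}), where the configuration-successor relation may be written with unrestricted first-order quantification, and then one does purely syntactic surgery: quantifiers are absorbed into the $\tc$ operator by adding tuple coordinates that sweep through the universe via $\suc$, nested or repeated applications of $\tc$ are merged into a single one, and an edge whose defining formula mentions several input literals is replaced by a short path each of whose steps checks a single literal --- this last step is what produces the projection condition of Definition~\ref{fop de}. The deterministic and alternating cases follow from the analogous normal forms for $\dtc$ and $\atc$. So to turn your proposal into a proof you must either carry out this normal-form machinery or exhibit a concrete configuration encoding for which the one-step relation genuinely has the form of Equation~(\ref{projection eq}) with quantifier-free $\alpha_j$; as written, the only nontrivial point is asserted rather than proved.
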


Hartmanis advocated trying to prove that certain very weak reductions do not exist, as a way
to separate complexity classes. Here are four enticing examples. For example, to prove that $\ptime\ne\np$ it is necessary and sufficent to
show that there is no qfp from $\Col$ to $\reacha$.  The problem $\qsat$ is a
PSPACE-complete problem.

\begin{corollary}\label{lower bound co} {\rm \cite{capture}}  
  \[
  \begin{array}{ccc}
    \logspace = \np &\qLra&   \Col \qfp \reachd\\[2ex]
    \nl = \np       & \qLra&  \Col \qfp \reach\\[2ex]
    \ptime = \np       & \qLra&  \Col \qfp \reacha\\[2ex]
    \ptime = \pspace       & \qLra&  \qsat \qfp \reacha\\[2ex]
  \end{array}
  \]
\end{corollary}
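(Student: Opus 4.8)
The plan is to derive Corollary \ref{lower bound co} from Theorem \ref{qfp complete th} together with Fagin's Theorem (Theorem \ref{Fagin th}) and the transitivity and composability of qfp reductions. Each of the four biconditionals has the same shape, so I would prove one direction generically and then instantiate. For the forward directions, suppose $\logspace = \np$ (the other three cases are identical with $\nl$, $\ptime$, $\pspace$ in place of $\logspace$). Since $\Col$ is in $\np$ by Fagin's Theorem (or directly, since $\Phi\sx{3\text{-}color}\in\soe$), the hypothesis $\logspace=\np$ puts $\Col$ into $\logspace$. By Theorem \ref{qfp complete th}, $\reachd$ is qfp-complete for $\logspace$, so in particular every problem in $\logspace$ — and hence $\Col$ — reduces to $\reachd$ via a qfp; that is, $\Col\qfp\reachd$. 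The same argument with $\reach$ gives the $\nl$ line and with $\reacha$ gives the $\ptime$ line. For the fourth line I would use that $\qsat\in\pspace$ (indeed it is $\pspace$-complete) and that $\reacha$ is qfp-complete for $\ptime$; if $\ptime=\pspace$ then $\qsat\in\ptime$, hence $\qsat\qfp\reacha$.

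For the reverse directions, the key facts I would invoke are: (i) first-order projections, and in particular qfps, are computable in logspace — indeed in $\ac^0$ — so a qfp is in particular a $\logspace$ reduction and an $\nl$ reduction and a $\ptime$ reduction; and (ii) the relevant complexity classes are closed under these reductions. So suppose $\Col\qfp\reachd$. Since $\reachd\in\logspace$ by Theorem \ref{dtc is l th} (as $\reachd$ is defined by a $\fo(\dtc)$ formula and $\fo(\dtc)\se\logspace$), and $\logspace$ is closed under qfp reductions, we get $\Col\in\logspace$. But $\Col$ is $\np$-complete via first-order reductions (Proposition \ref{Col for complete pr}), in fact via qfps (Proposition \ref{Col qfp complete pr}), so every problem in $\np$ reduces to $\Col$ via a qfp, and therefore lies in $\logspace$; thus $\np\se\logspace$. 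Since $\logspace\se\np$ always, we conclude $\logspace=\np$. The $\nl$ and $\ptime$ lines are the same, using $\reach\in\nl$ and $\reacha\in\ptime$ from Theorem \ref{dtc is l th} (via $\fo(\ptc)\se\nl$ and $\fo(\atc)\se\ptime$). For the $\pspace$ line, $\Col\qfp\reacha$ would give $\Col\in\ptime$ (since $\reacha\in\ptime$), but that only yields $\np\se\ptime$, not $\pspace\se\ptime$ — so here I would instead start from $\qsat\qfp\reacha$, conclude $\qsat\in\ptime$, and use that $\qsat$ is $\pspace$-complete under first-order (hence qfp) reductions to get $\pspace\se\ptime$, and the reverse inclusion is immediate.

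The main obstacle — and the only place where care is genuinely required — is verifying that the ambient classes $\logspace$, $\nl$, $\ptime$, $\pspace$ are each closed under qfp (equivalently first-order projection, equivalently $\ac^0$) reductions, and that the hard problems $\Col$ and $\qsat$ retain their completeness under \emph{qfps} rather than merely under logspace or ptime many-one reductions. The first point is standard: composing a qfp with a machine for the target problem costs only the resources to evaluate a fixed bounded quantifier-free (or first-order) formula, which is negligible for all four classes. The second point for $\Col$ is exactly Proposition \ref{Col qfp complete pr}; for $\qsat$ the analogous statement — $\qsat$ is $\pspace$-complete via qfps, which follows from the descriptive characterization $\pspace = \fo(\textsc{pfp})$ in the ordered setting — is the one ingredient I would simply cite from \cite{book} rather than reprove. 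Given these, the corollary is a mechanical chaining of reductions: the substance lives entirely in Theorem \ref{qfp complete th}, and the corollary merely records its consequences for the P-versus-NP-style questions.
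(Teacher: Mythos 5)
Your derivation is correct and is exactly the argument the paper intends: the corollary is stated as an immediate consequence of Theorem~\ref{qfp complete th} (qfp-completeness of $\reachd$, $\reach$, $\reacha$), combined with the qfp-completeness of $\Col$ for $\np$ (Proposition~\ref{Col qfp complete pr}) and of $\qsat$ for $\pspace$, plus closure of the smaller classes under qfps, and the paper gives no further proof beyond citing \cite{capture}. Your handling of the one asymmetric case (using $\qsat$ rather than $\Col$ for the $\ptime=\pspace$ line) and your identification of the closure-under-qfp and completeness-via-qfp ingredients as the only points needing care match the intended reasoning.
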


\section{$\nspace[s(n)] \se \co\nspace[s(n)]$}

In the summer of 1987, I proved the following theorem, which was also independently proved by
R\'obert Szelepcs\'enyi.

\begin{theorem}\label{space th} {\rm \cite{space,Robert}} $\;$
  For any $s(n) \geq \log n$,
\[\nspace[s(n)] \qe \co\nspace[s(n)] \; .\]
\end{theorem}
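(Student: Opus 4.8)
The plan is to use the qfp-completeness of $\reach$ for $\nl$ (Theorem~\ref{qfp complete th}) to concentrate the entire statement into a single combinatorial fact: that \emph{non-}reachability in a directed graph is itself decidable in nondeterministic logspace. This is exactly where Hartmanis's faith in weak reductions is repaid. If $\overline{\reach}\in\nl$, then for any $L\in\co\nl$ we have $\overline{L}\in\nl$, hence $\overline{L}\qfp\reach$ by completeness, hence $L\qfp\overline{\reach}$ via the very same reduction; since $\nl$ is closed under qfp reductions (indeed under logspace reductions), $L\in\nl$, so $\nl=\co\nl$. The general bound $s(n)\ge\log n$ then follows by running the same nondeterministic algorithm on the $2^{O(s(n))}$-vertex configuration graph of an $\nspace[s(n)]$ machine, which costs space $O(s(n))$ --- with the usual device of trying successively larger space bounds to cope with an $s$ that is not space-constructible. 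So everything reduces to placing $\overline{\reach}$ in $\nl$.

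For this I would use \emph{inductive counting}. Given an input graph $G$ on the ordered vertex set $\{\mn=0,1,\ldots,n-1=\mx\}$, let $R_d$ be the set of vertices reachable from $\mn$ by a path of length at most $d$, and let $r_d=|R_d|$. The algorithm computes the single numbers $r_0,r_1,\ldots$ in turn, carrying only $O(\log n)$ bits at any moment: clearly $r_0=1$, and once we reach an index $d$ with $r_d=r_{d-1}$ we have $R_d=R$, so ``$\mx\notin\reach$'' can be decided from the stored count $r_{d-1}$.

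The heart of the matter is the inductive step. Given the \emph{correct} value $r_{d-1}$, a nondeterministic machine can, for any specified vertex $v$, either verify ``$v\in R_d$'' or verify ``$v\notin R_d$'' in space $O(\log n)$, in such a way that on every computation branch it returns the correct answer or else halts and rejects. Verifying $v\in R_d$ is immediate: guess a path of length $\le d$ from $\mn$ to $v$, remembering only the current vertex and a step counter. Verifying $v\notin R_d$ is the clever part: scan $u=0,1,\ldots,n-1$ in order, and for each $u$ nondeterministically either skip it or assert $u\in R_{d-1}$ and prove that assertion by guessing a path of length $\le d-1$ from $\mn$ to $u$; each time an assertion is proven, check that $u\ne v$ and that $(u,v)$ is not an edge, and increment a counter $c$; at the end, reject unless $c=r_{d-1}$. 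An accepting branch has then exhibited all of $R_{d-1}$ and certified that none of its members equals or points to $v$, which is exactly ``$v\notin R_d$'', and some branch does succeed. Running this subroutine on $v=0,1,\ldots,n-1$ and tallying the successes computes $r_d$ from $r_{d-1}$, again in space $O(\log n)$; iterating to the fixed point, and then testing $v=\mx$, decides $\overline{\reach}$.

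The step that most repays care --- the only real obstacle --- is making the non-reachability certificate \emph{sound on every branch}. The whole construction hangs on the invariant that the count $r_{d-1}$ handed down from the previous stage is exactly right, and this is enforced solely by the test $c=r_{d-1}$: it forces any accepting branch to enumerate the entire set $R_{d-1}$, while scanning $u$ in increasing order prevents it from counting any vertex twice, so it can neither overshoot nor fall short. Everything else --- the path guesses, the counters bounded by $n$, the book-keeping that threads $r_{d-1}$ into the computation of $r_d$ --- is routine, and the working storage is never more than a constant number of vertex names and counters, i.e.\ $O(\log n)$ bits, as required.
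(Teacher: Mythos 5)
Your proof is correct and is essentially the paper's own argument: the inductive-counting certificate for non-reachability is exactly the content of the $\dist$, $\ndist$, $\beta$, $\gamma$, $\delta$ constructions in the paper's proof of Theorem~\ref{complement th}, just written as a nondeterministic logspace algorithm rather than as $\fo(\ptc)$ formulas --- the very translation the paper says it made because it was ``more accessible to general theoreticians.'' Your surrounding steps (using qfp-completeness of $\reach$ and closure of $\nl$ under these reductions to get $\nl=\co\nl$, then the configuration-graph simulation with successively larger space bounds to reach arbitrary $s(n)\geq\log n$) are the standard and correct way to pass to the general statement, which the paper leaves implicit by citing \cite{space,Robert}.
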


This had been a long-standing open problem \cite{Kuroda}.  Everyone I discussed it with had
conjectured the negation of 
Theorem \ref{space th}.  Nondeterministic space corresponds to an existential search, and it seemed
natural --- based on our intuition concerning existential quantification --- that this class would
not be closed under complementation.

I was looking at the relationship between $\nl$ and $\co\nl$ because of a surprising new result,
namely that the alternating logspace hierarchy collapses at the second level, i.e., $\Sigma_2^L =
\Pi_2^L$ \cite{JKL}.  Like everyone else, I still believed that $\nl \ne \co\nl$ and I knew that by
Theorem \ref{qfp complete th}, this would be true iff $\ov{\reach} \not\qfp \reach$.

So, I tried to prove that $\ov{\reach} \not\qfp \reach$.  My proof idea, was, suppose there is a
reduction $\ov{\reach} \qfp \reach$, what would it look like?  I knew from  the definition of qfps
(Definition \ref{fop de}) and the proof of Theorem \ref{qfp complete  th} that  there is very little
flexibility concerning what such a qfp would look like. 
Trying to show exactly what it would have to look like if it existed, I built it:

\begin{theorem}\label{reachbar th}
\quad $\ov{\reach} \qfp \reach$.
\end{theorem}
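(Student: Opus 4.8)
The plan is to construct the quantifier-free projection $\ov{\reach} \qfp \reach$ directly by encoding the standard inductive counting argument as a single directed graph whose vertices are tuples over the universe of the input graph $G$. Recall that $G \in \ov{\reach}$ means there is \emph{no} path in $G$ from $\mn$ to $\mx$. The inductive counting idea is: if we know $c_k$, the exact number of vertices reachable from $\mn$ in at most $k$ steps, then we can nondeterministically verify, for any particular vertex $v$, whether $v$ is reachable in at most $k+1$ steps, and by doing this for all $v$ and counting successes we can compute $c_{k+1}$; finally, once we have $c_n$, we can nondeterministically verify that $\mx$ is \emph{not} among the reachable vertices (by guessing paths to all $c_n$ reachable vertices, none of which is $\mx$). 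The whole procedure is a nondeterministic logspace computation with the property that \emph{some} guessed path through its configuration graph reaches the accepting configuration if and only if $\mx$ is unreachable in $G$.

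The key steps, in order, are as follows. First I would fix the configuration graph $H$ of the inductive-counting NL-machine run on $G$: a configuration records the current phase $k$, the count $c_k$ being certified, a running tally of verified vertices, the vertex currently being checked, and the vertex along the currently guessed path. All of these components are elements of the universe $A$ of $G$ or small tuples thereof, so a configuration is a $j$-tuple over $A$ for some fixed $j$ depending only on the machine, not on $G$; thus the universe of $H$ is $A^j$, as Definition \ref{first-order query de} requires with $k=j$. Second, I would write the edge relation of $H$. Transitions of the machine either (a) advance bookkeeping counters and phase indices --- this depends only on the ordering and successor on $A$, hence is a numeric formula --- or (b) guess an edge of $G$ and move the path-pointer along it, which requires consulting exactly one bit of the input, namely $E(u,v)$ for the relevant $u,v$ components of the source configuration. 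Third, I would check that the resulting formula for the edge relation of $H$ has exactly the shape \eqref{projection eq}: it is a disjunction of mutually exclusive numeric guards $\alpha_\ell$ (one per transition type, distinguished by the control state encoded in the tuple), each either standing alone or conjoined with the single literal $E(u,v)$. Finally I would arrange that the initial configuration is $\mn$ of $H$ and the unique accepting configuration is $\mx$ of $H$, so that $H$ has a path from $\mn$ to $\mx$ iff the machine accepts iff $G \in \ov{\reach}$, giving $\ov{\reach} \qfp \reach$.

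The main obstacle, and the place where Hartmanis's lesson about the rigidity of very weak reductions actually does the work, is step two: forcing every transition to consult \emph{at most one} input bit. A naive simulation would, in a single step, want to know ``does $G$ have an edge from $u$ to \emph{some} $v$?'' or ``is $v$ reachable?'' --- queries touching many bits of $E$. The fix is precisely the inductive-counting protocol: it is designed so that progress is made one guessed edge at a time, and the global information about $G$ is compressed into the single integer $c_k$ carried in the configuration rather than re-derived from $E$. So the real content is checking that the Immerman--Szelepcs\'enyi counting procedure can be laid out as a configuration graph all of whose local transitions are either purely numeric or depend on a single atom $E(u,v)$ --- equivalently, that the machine witnessing $\ov{\reach} \in \nl$ is not merely logspace but has a configuration graph expressible by quantifier-free projection formulas. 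Once that structural check is in place, writing the formulas $\alpha_\ell$ and assembling them into \eqref{projection eq} is routine bookkeeping over the ordering of $A$.
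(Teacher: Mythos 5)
Your proposal is correct in outline, but it takes a genuinely different route from the one the paper presents. You prove the theorem ``through the machine'': exhibit the inductive-counting NL algorithm for non-reachability and then observe that its configuration graph is itself a quantifier-free projection of the input -- each transition is either pure bookkeeping (a numeric guard over $\suc,\mn,\mx$) or consults the single literal $E(u,v)$ or $\lnot E(u,v)$, which is exactly the shape Definition \ref{fop de} demands. In effect you are re-deriving, for this one machine, the proof that $\reach$ is qfp-complete for $\nl$ (Theorem \ref{qfp complete th}), and your argument could be stated even more modularly as: $\ov{\reach}\in\nl$ by inductive counting, hence $\ov{\reach}\qfp\reach$ by that completeness theorem (no circularity, since Theorem \ref{qfp complete th} predates the complementation result). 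The paper goes the other way: it stays inside logic and writes the counting argument as explicit $\fo(\ptc)$ formulas ($\dist$, $\ndist$, $\beta$, $\gamma$, $\delta$), proving $\fo(\ptc)=\fo(\tc)$ (Theorem \ref{complement th}) and using that construction as the guide for writing out the qfp directly, without passing through Turing machines; amusingly, the paper notes that the author's original write-up was the machine-style counting argument, so your version is close in spirit to \cite{space,Robert}, while the paper showcases the descriptive-complexity formulation. What your route buys is accessibility and modularity; what the paper's route buys is an explicit, self-contained logical construction in which the nested checks (the universal quantifier and the inner $\ndist$ in $\gamma$) are visible and must then be flattened into a single quantifier-free $\tc$ normal form -- the step the paper leaves to the reader, and which your configuration-graph view handles automatically by folding the nesting into control flow. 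Do note the routine but real encoding details you wave at: counts range up to $n$ while the universe has $n$ elements (use an extra coordinate or shift), the finitely many control states must be testable by quantifier-free numeric formulas (e.g.\ spend a few coordinates restricted to $\mn/\mx$), and acceptance must be funneled into the single all-$\mx$ tuple so that the target really is $\reach$ as defined.
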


The proof of Theorem \ref{reachbar th} is easily translated to a counting argument on Turing
Machines.  The night I proved this, I wrote it up that way because it was more accessible to general
theoreticians, who are not familiar with qfps.  The following is a version of the proof.
I think it is a sufficient guide for any reader who wants to explicitly write
out the qfp of Theorem \ref{reachbar th}.

\begin{theorem}\label{complement th} {\rm \cite{book}} $\;$ For finite, ordered structures,
\[ \quad \fo(\ptc) \quad =\quad \fo(\tc)\; .\]
\end{theorem}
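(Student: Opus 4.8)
The plan is to prove the two inclusions separately. The inclusion $\fo(\ptc) \se \fo(\tc)$ is immediate, since a positive occurrence of the transitive closure operator is a special case of an arbitrary occurrence, so every $\fo(\ptc)$ formula is already a $\fo(\tc)$ formula. All of the content sits in the reverse inclusion $\fo(\tc) \se \fo(\ptc)$, and the plan is to reduce it entirely to Theorem \ref{reachbar th} together with the descriptive-complexity characterizations already established.

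\emph{Step 1: $\fo(\ptc)$ is closed under complementation.} By Theorem \ref{dtc is l th} we have $\fo(\ptc) \se \nl$, and the matching lower bound is the qfp-completeness of $\reach$ for $\nl$ from Theorem \ref{qfp complete th}: every $\nl$ problem qfp-reduces to $\reach$, and $\reach \in \fo(\ptc)$ (it is $\{G : G \models \tc(E)(\mn,\mx)\}$, a positive use of $\tc$), while $\fo(\ptc)$ is closed under qfp reductions by substituting the reduction's defining formulas into the $\fo(\ptc)$-defining formula. Hence $\fo(\ptc) = \nl$. Now let $S \in \fo(\ptc)$. Since $S \in \nl$, we have $S \qfp \reach$, and the same reduction shows $\ov S \qfp \ov{\reach}$. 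By Theorem \ref{reachbar th}, $\ov{\reach} \qfp \reach$. Quantifier-free projections compose --- the composite of a $k$-ary and an $\ell$-ary qfp is a $(k\ell)$-ary qfp, obtained by substituting the inner defining formulas for the input predicates in the outer ones and re-indexing tuples, which preserves the projection normal form of Equation (\ref{projection eq}) because a literal's worth of information substituted for an input predicate keeps each output bit dependent on a single input bit --- so $\ov S \qfp \reach$, whence $\ov S \in \nl = \fo(\ptc)$.

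\emph{Step 2: structural induction eliminating negated $\tc$.} I would argue by induction on the construction of a $\fo(\tc)$ formula $\psi$ that the relation it defines is in $\fo(\ptc) = \nl$. Atomic formulas are trivially in $\fo(\ptc)$; the Boolean connectives are handled because $\nl$ is closed under union, intersection, and --- by Step 1 --- complementation; first-order quantifiers are handled because $\nl$ is closed under projection and co-projection. The only remaining case is $\psi \equiv (\tc[\varphi])(\bar s,\bar t)$, where by induction hypothesis $\varphi$ defines an $\nl$ relation $R$ on $k$-tuples. Then $\psi$ asserts reachability from $\bar s$ to $\bar t$ in the graph on $k$-tuples whose edge relation is $R$; this graph has polynomial size and an $\nl$-decidable edge relation, so reachability in it is again decidable in $\nl$ --- a nondeterministic logspace machine guesses the path one $k$-tuple at a time, invoking the $\nl$ procedure for $R$ at each step --- so $\psi$ defines an $\nl = \fo(\ptc)$ relation. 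This closes the induction, giving $\fo(\tc) \se \fo(\ptc)$, hence equality.

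\emph{Where the difficulty lies.} All the genuine combinatorics --- the inductive counting that makes nondeterministic space closed under complement --- is already isolated in Theorem \ref{reachbar th}, which I am assuming, so no hard step remains; what is left is bookkeeping. The two most error-prone points are verifying that qfps compose to a qfp (one must check that the projection normal form survives the substitution) and making the ``$\nl$ with an $\nl$ edge relation is still $\nl$'' step of the induction fully rigorous rather than hand-waved. A cleaner alternative packaging that avoids the explicit induction is to first use Step 1 to collapse nested and negated $\tc$'s and obtain the normal form that every $\fo(\tc)$ sentence is equivalent to $(\tc[\varphi])(\mn,\mx)$ for a single first-order $\varphi$, which is then in $\fo(\ptc)$ by inspection; I would present whichever of the two routes reads more smoothly in context. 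Finally, I would note that the argument relativizes to space bound $s(n) \geq \log n$ exactly as in Theorem \ref{space th}, which is the machine-level content underlying the same equality.
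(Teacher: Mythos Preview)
Your deduction is formally sound, but it inverts the dependency the paper intends. In the paper, Theorem~\ref{reachbar th} is stated without an independent proof; the surrounding text says explicitly that the proof of Theorem~\ref{complement th} is ``a version of the proof'' of Theorem~\ref{reachbar th} and ``a sufficient guide for any reader who wants to explicitly write out the qfp.'' So the inductive-counting content you have black-boxed as Theorem~\ref{reachbar th} is exactly what the proof of Theorem~\ref{complement th} is meant to supply. Assuming it here is circular in the paper's logical structure: you have correctly identified where the combinatorics lives, and then declined to do it.

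The paper's proof is direct and constructive: it shows that $\lnot\tc(E)(\mn,\mx)$ is expressible in $\fo(\ptc)$ by writing, entirely within $\fo(\ptc)$, a distance predicate, a ``not reachable within distance $d$'' predicate parametrized by the count $n_d$ of vertices reachable within $d$ steps, and a step relation $\delta$ that carries $n_d$ to $n_{d+1}$. This is the Immerman--Szelepcs\'enyi inductive counting argument, phrased inside positive transitive-closure logic rather than on Turing machines. Once the complement of reachability is expressed positively, the remainder of $\fo(\tc)\subseteq\fo(\ptc)$ is the routine closure/normal-form bookkeeping you sketch in Step~2.

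Your Step~1 and Step~2 are correct as wrappers and would be a clean way to package the result \emph{after} the counting is in hand; but as written the proposal outsources the one nontrivial idea the theorem is about.
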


\begin{sproof}
It suffices to show that the relation $\lnot\tc (E)(\mn,\mx)$ meaning that there is no path from
$\mn$ to $\mx$, is expressible in $\fo(\ptc)$.

To do this, we count the number of reachable vertices.  Consider an input graph, $G$.  As usual, we
consider the elements of $G$ as numbers as well as vertices.  In one setting, as distances, we think
of these numbers as ranging from $0$ to $n-1$.  In another setting, as counts of the number of
reachable vertices, we have numbers ranging from 1 to $n$.  Writing these two sets of numbers as
numbers rather than as vertices makes our notation simpler to understand.

Define $n_d$ to be the number of vertices in $G$ that are reachable from $\mn$ in a path of length
at most $d$.  Given number $n_d$, we show how to compute number $n_{d+1}$.  As a first step, we show
that $n_d$ allows us to say in $\fo(\ptc)$ that there is {\em not} a path of length at most $d$ from
$\mn$ to a given vertex.

\begin{claim}\label{ndist cl}
The following formulas are expressible in $\fo(\ptc)$:
\begin{enumerate}
\item $\dist(x,d)$, meaning that there is a path of length at most $d$ from $\mn$ to $x$ \item
$\ndist(x,d;m)$, which --- when $m=n_d$ --- means that there is no path of length at most $d$ from
$\mn$ to $x$.
\end{enumerate}
\end{claim}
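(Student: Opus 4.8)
The plan is to establish both items by writing down explicit formulas, each using a single \emph{positive} application of $\ptc$, and then appealing to the closure of $\fo(\ptc)$ under substitution of $\fo(\ptc)$-formulas for positively occurring atoms. For part~(1) I would pass to the ``time-stamped'' graph whose vertices are pairs $(v,i)$, the second coordinate read as a distance in $\{0,\ldots,n-1\}$, with quantifier-free edge formula
\[\psi\bigl((v,i),(w,j)\bigr)\;\equiv\;\suc(i,j)\,\land\,\bigl(E(v,w)\lor v=w\bigr),\]
so that a $\psi$-path of length exactly $j$ from $(\mn,\mn)$ to $(x,j)$ codes a walk of length at most $j$ from $\mn$ to $x$ in $G$, the self-loops $v=w$ letting the walk idle. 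Then $\dist(x,d)\equiv(\ptc\,\psi)\bigl((\mn,\mn),(x,d)\bigr)$, a single positive use of $\ptc$.

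For part~(2) the idea is the inductive counting trick: under the promise $m=n_d$, the vertex $x$ fails to be reachable within distance $d$ exactly when one can exhibit $m$ pairwise distinct vertices, all different from $x$, each certified reachable within $d$ by the part~(1) formula $\dist(\cdot,d)$ --- because those $m$ certified vertices already exhaust the reachable set, leaving no room for $x$. To produce such a list inside first-order logic I would again use $\ptc$, now on a ``counter graph'' whose nodes $(c,v)$ are read as ``$c$ vertices found so far, the largest being $v$'', with edge formula
\[\chi\bigl((c,v),(c',w)\bigr)\;\equiv\;\suc(c,c')\,\land\,\dist(w,d)\,\land\,w\ne x\,\land\,(c=\mn\lor v<w),\]
the clause $c=\mn$ handling the first vertex. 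Then $\ndist(x,d;m)\equiv\exists v\,(\ptc\,\chi)\bigl((\mn,\mn),(m,v)\bigr)$, which is built positively out of $\dist$ and ordinary first-order connectives.

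The main obstacle is of two kinds. The routine-but-real one is the index bookkeeping: reconciling the $0,\ldots,n-1$ ``distance'' scale with the $1,\ldots,n$ ``count'' scale, getting the very first enumeration step right, and noting that the strict order $<$ used in $\chi$ is itself $\fo$-definable from $\suc$ on ordered structures. The genuinely substantive point is the closure lemma: one must check that substituting the $\ptc$-formula $\dist$ for the positively occurring atoms of $\chi$ --- and, downstream, of the induction step that computes $n_{d+1}$ from $n_d$ by re-running exactly this enumeration and testing whether any listed vertex has an edge to the target --- keeps the result in $\fo(\ptc)$, i.e.\ that nested positive $\ptc$'s collapse to one. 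I expect that lemma, rather than the constructions of $\psi$ and $\chi$, to be where the care is needed.
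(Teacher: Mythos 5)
Your proposal is correct and takes essentially the same approach as the paper: the identical time-stamped product graph (with idle self-loops and $\suc$ on the second coordinate) for $\dist$, and for $\ndist$ the same inductive-counting idea --- under the promise $m=n_d$, certify at least $m$ vertices different from $x$ and reachable within $d$ --- realized by a positive $\ptc$ over a (count, vertex) graph in which $\dist$ occurs only positively, the sole (inessential) difference being that the paper scans all vertices in successor order and optionally increments the counter while you enumerate a strictly increasing sequence of certified vertices. Two small notes: no collapse lemma is actually needed for the claim, since $\fo(\ptc)$ by definition admits nested positive occurrences of $\tc$ (the paper simply nests them), and the strict order used in your $\chi$ is not first-order definable from $\suc$ alone but is available anyway, either from the ambient ordering on ordered structures or positively as $\tc(\suc)$.
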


\begin{sproof}
There is no trouble writing $\dist(x,d)$ positively,
\begin{eqnarray*}
\dist(x,d) &\equiv& \tc(\alpha)(\mn,0,x,d), \quad \mbox{where} \\[1ex] \alpha(a,i,b,j) &\equiv&
(E(a,b) \lor a=b) \;\land\; \suc(i,j)
\end{eqnarray*}

We write the formula, $\ndist(x,d;m)\in \fo(\ptc)$ to mean the following:
\[\ndist(x,d;m) \sE \mbox{(There are at least $m$ vertices $v$)}(v\ne x
\sland \dist(x,d))\; .\] It follows that when $m= n_d$, we have that $\ndist(x,d;m)$ is equivalent to
$\lnot \dist(x,d)$.

Define edge relation $\beta$ on pairs of vertices as follows,
\begin{eqnarray*}
\beta(v,c,v',c') &\equiv& \mn \ne x \;\land\; \suc(v,v')\\ &&\land\quad (c=c' \;\lor\; (\suc(c,c')
\land \dist(v',d) \land v'\ne x))\; .
\end{eqnarray*}
Suppose that $c$ is the number of vertices --- not including $x$ --- that are at most $v$ and
reachable from $0$ in at most $d$ steps.  Then we can take a $\beta$-step from $\angle{v,c}$ to
$\angle{v+1,c}$ guessing that $v+1$ is not reachable from $0$ in $d$ steps; or, we may take a
$\beta$-step to $\angle{v+1,c+1}$ {\bf if we prove} that $v+1$ is not equal to $x$ and is reachable
from $0$ in $d$ steps.

Thus, there is a path from $\angle{0,1}$ to $\angle{v,c}$ iff there are at least $c$ vertices not
equal to $x$ and at most $v$ such that $\dist(v,d)$:
\[ \tc(\beta)(0,1,v,c) \qLra  c\leq \left|\bigset{w}{w\leq v \sland
\dist(w,d)}\right|\; . \] $\ndist$ can now be defined as follows,

\labeleqn{\Box}{\ndist(x,d;m) \qE \tc(\beta)(\mn,1,\mx,m)\; .}
\end{sproof}

Using Claim \ref{ndist cl}, we now define the relation $\delta(d,m,d',m')$ so that if $m= n_d$, then
$m' = n_{d+1}$.  We simply cycle through all the vertices, counting how many of them are reachable
in $d+1$ steps:
\begin{eqnarray*}
\delta(d,m,d',m') &\equiv& \suc(d,d') \;\land\; \tc(\gamma)(\mn,1,\mx,m')\\[1ex] \gamma(v,c,v',c')
&\equiv& \suc(v,v') \sland \Bigl([\suc(c,c') \sland \dist(v',d+1)] \quad \lor\\ &&\;[c=c' \land
(\forall z)(\ndist(z,d;m) \lor (z\ne v' \land \lnot E(z,v')))]\Bigr)\\
\end{eqnarray*}
It follows that formula $\tc(\delta)(0,1,n-1,m)$ holds iff $m = n_{n-1}$ is the number of vertices
in $G$ that are reachable from $\mn$.  As claimed, using this $m$, we can express the
nonexistence of a path,  \labeleqn{\Box}{\lnot \tc(E)(\mn,x) \quad\equiv\quad (\exists
m)(\tc(\delta)(0,1,n-1,m) \,\land\, \ndist(x,n-1;m))}
\end{sproof}

\section{Concluding Comments}

Inspired by my adviser, and thinking about  Theorem \ref{reachbar th} and Corollary \ref{lower bound
  co}, I believe that fops and qfps have great
potential for answering complexity questions.  For a certain qfp to exist between given
problems, there is very little flexibility concerning what it would have to look like.  And very
little flexibilty in what you are searching for can be a good thing.

As I was leaving Cornell after my Ph.D. defense, Hartmanis said to me, ``You might wonder how
you will find time to do research as a professor when you have to prepare classes, teach, hold
office hours, grade, write letters of recommendation for your students, serve on university and
department committees, interview and hire new faculty, and so on.''   I asked how
he did it and he said, ``I take long showers''  \cite{tenureTrack}.

It is common experience that creative works come together at surprising times.  For Hartmanis, it
was showers.  For me it's walking with dogs.  I completed the qfp reducing
$\ov{\reach}$ to $\reach$, and thus the proof that nondeterministic space is closed under
complementation, while I was giving my Irish Setter, Molly, her evening walk.

\begin{center}
  \includegraphics[width = .86\textwidth]{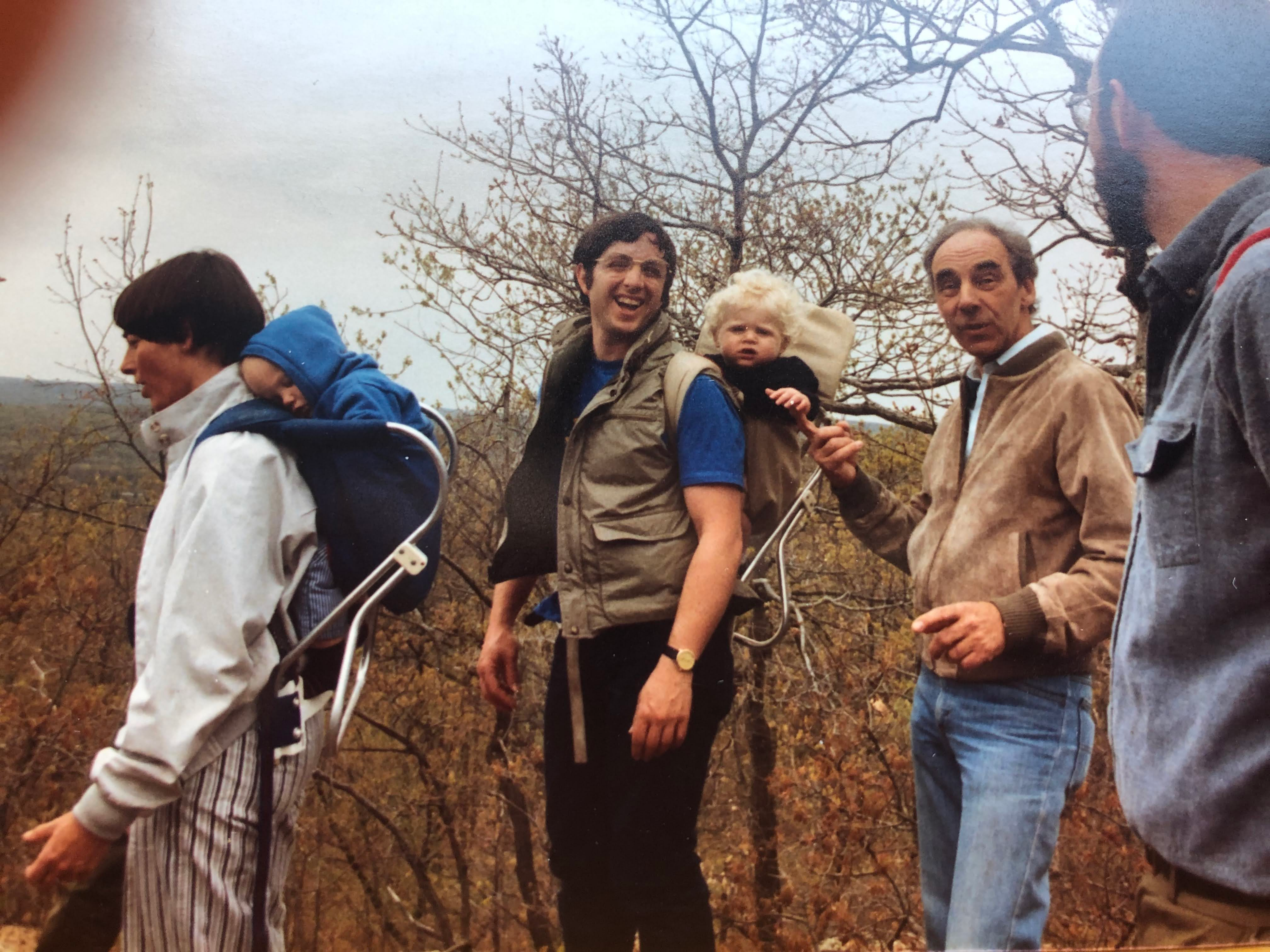}

{\small Fran, Geoff Kozen, Neil, Daniel Immerman, Juris Hartmanis, Scott Drysdale}
\end{center}

\bfhead{Acknowledgements:} The above photograph was taken by Susan Landau. Much thanks to Eric
Allender and Phokion Koklaitis for very helpful comments on a previous draft.

\end{document}